\newtheorem{prop}{Proposition}
\newtheorem{remark}{Remark}
\title{\LARGE \bf
Quantum Pontryagin Neural Networks in Gamkrelidze Form Subjected to the Purity of Quantum Channels}
\author{Nahid Binandeh Dehaghani, A. Pedro Aguiar, Rafal Wisniewski
\thanks{N. Dehaghani and A. Aguiar are with the Research Center for Systems and Technologies (SYSTEC), Electrical and Computer Engineering Department, FEUP - Faculty of Engineering, University of Porto, Rua Dr. Roberto Frias sn, i219, 4200-465 Porto, Portugal
        {\tt\small \{nahid,pedro.aguiar\}@fe.up.pt}}%
\thanks{R. Wisniewski is with Department of Electronic Systems, Aalborg University, Fredrik Bajers vej 7c, DK-9220 Aalborg, Denmark
        {\tt\small raf@es.aau.dk}}%
\thanks{The authors acknowledge the support of FCT for the grant 2021.07608.BD, the ARISE Associated Laboratory, Ref. LA/P/0112/2020, and the R$\&$D Unit SYSTEC-Base, Ref. UIDB/00147/2020, and Programmatic, Ref. UIDP/00147/2020 funds, and also the support of projects SNAP, Ref. NORTE-01-0145-FEDER-000085, and  RELIABLE (PTDC/EEI-AUT/3522/2020) funded by national funds through FCT/MCTES. The work has been done in the honor and memory of Professor Fernando Lobo Pereira.}
}
\begin{document}

\maketitle
\thispagestyle{empty}
\pagestyle{empty}

\begin{abstract}
We investigate a time and energy minimization optimal control problem for open quantum systems, whose dynamics is governed through the Lindblad (or Gorini-Kossakowski-Sudarshan-Lindblad) master equation. The dissipation is Markovian time-independent, and the control is governed by the Hamiltonian of a quantum-mechanical system. We are specifically interested to study the purity in a dissipative system constrained by state and control inputs. The idea for solving this problem is by the combination of two following techniques. We deal with the state constraints through Gamkrelidze revisited method, while handling control constraints through the idea of saturation functions and system extensions. This is the first time that quantum purity conservation is formulated in such framework. We obtain the necessary conditions of optimality through the Pontryagin Minimum Principle. Finally, the resulted boundary value problem is solved by a Physics-Informed Neural Network (PINN) approach. The exploited Pontryagin PINN technique is also new in quantum control context. We show that these PINNs play an effective role in learning optimal control actions.
\end{abstract}
\section{INTRODUCTION}
During the last few decades, control of quantum systems has been extensively investigated. Recently, dissipative quantum systems have attracted much more attention in comparison with the quantum conservative systems, which are normally described by the time-dependent Schrödinger equation, for pure states captured by a wave function, or Liouville–von Neumann equation (LVNE), for both pure and mixed states described by the density operator
rather than the wavefunction. The dynamics of open quantum systems is governed through the extension of LVNE by considering dissipative processes, known as the so-called Lindblad master equation. Due to the vast application of dissipative systems in various areas of knowledge such as quantum physics, chemistry and also quantum computation and information processing, the control of such systems is of high interest. Control of open quantum systems ranges from the control of dissipation resulted from molecular collisions to the fabrication of quantum computers. 

Amongst quantum control tools, optimal control theory (OCT) has been applied to an increasingly extensive number of problems. Although OCT has been very well exploited for closed quantum systems, the control of dissipative quantum systems is still open to new ideas. Important examples of what may be thought of as the optimal control of dissipative quantum systems are the analytic solution for cooling the three-level $\Lambda$ systems, \cite{sklarz2004optimal}, optimal control of quantum purity for dissipative two-Level open quantum systems, \cite{clark2017optimal}, and quantum optimal control problem with state constrained preserving coherence, \cite{dehaghani2022a}. In optimal control theory and particularly for state-to-state population transition problem, it is possible to find the best achievable control to take a dynamical system from one initial state to predefined target state, especially in the presence of state or control constraints by exploiting Pontryagin's Maximum Principle. In \cite{dehaghani2022a}, we addressed the problem of maximizing fidelity in a quantum state transition process such that the coherence is preserved within some bounds. We formulated the quantum optimal control problem in the presence of state constraints and obtained optimality conditions of Pontryagin’s Maximum Principle in Gamkrelidze’s form for the first time in quantum control context. 

In this work, we consider an optimal control problem of quantum dissipative dynamics, in which the dissipation is Markovian time-independent. This means that the memory effects are neglected and the dynamics of the quantum system under study is only dependent on the current state not the past history. Therefore, we can describe the density operator evolution by the Lindblad master equation. In such dynamics, there is an interaction of the controllable part of the system, the Hamiltonian (or conservative) term, and the uncontrollable part, the non-Hamiltonian (or dissipative) term. Under most conditions, dissipation leads to an increase in entropy (or a decrease in purity) of the system. However, proposing an strategy to control the Hamiltonian term of the system evolution with the intent that the non-Hamiltonian term causes an increase in purity rather than decrease is a matter of debate. In this regard, we consider two types of constraint, namely state and control constraints, and preserve the quantum purity in a state transition problem. We then exploit the newly developed Pontryagin Physics-Informed Neural Networks (PINN) method derived from the Theory of Functional Connection (TFC), \cite{d2021pontryagin, johnston2021theory}, to learn the optimal control actions and solve the Boundary Value Problem (BVP) associated with the necessary optimality conditions resulted from the indirect application of Pontryagin Minimum Principle (PMP) in Gamkrelidze form.
\subsection{Main Contributions:}
We investigate the optimal control problem of minimum-time and minimum-energy for dissipative quantum systems interacting with the environment, whose dynamics is governed by
the Lindblad equation. Our contributions are itemized as follows
\begin{itemize}
    \item We consider the purity of quantum channels as state constraints, such that the purity is preserved within some bounds. The technique to deal with this problem leans upon Gamkrelidze revisited method, which is novel in quantum control context. (We have shown the feasibility of this method for coherence preservation in a different framework in one of our recent works, \cite{dehaghani2022a}.)
    \item We also consider control constraints, such that the value of control function remains in a defined interval. We handle control constraints through the idea of saturation functions and system extensions. This technique is also new in quantum control problems.
\item We derive the necessary optimality PMP conditions, show under which the necessary conditions are also sufficient conditions for (local) optimality, and finally solve the boundary value problem resulted from the PMP. To do so, we use and adapt a recently developed neural network approach known as the Pontryagin neural networks. The application of this method is new in quantum control problems, where we solve the state constrained PMP in Gamkrelidze form by means of Pontryagin Neural Networks (PoNNs).
\end{itemize}
Overall, the combination of these two techniques is new for an optimal control problem. Moreover, this is for the first time that purity preservation is formulated and solved in such framework. We also study the fidelity criteria in terms of purity, and see the effects of state constraints of purity preservation on fidelity. 

The structure of the work is as follows: In Section II, we introduce the Lindblad master equation and provide a linear isomorphism such that the transformed system is linear in the state and real. Section III formulates the optimal control problem of minimum time and energy, and in the next section we present the Pontryagin's Minimum Principle in Gamkrelidze form with saturation functions. We solve the resulted BVP through a physics-informed neural network approach, explained in section V. The simulation results have been shown for the quantum state transfer problem in a two-level system in section VI. The paper ends with conclusion and an overview on prospective research challenges.

\subsection{Notation.} For a general continuous-time trajectory $x$, the term $x(t)$ indicates the trajectory assessed at a specific time $t$. For writing the transpose of a matrix (or a vector) we use the superscript $T$, and we use $\dagger$ to show the conjugate transpose of a matrix (or vector). To denote the wave functions as vectors, we use the Dirac notation such that $\left| \psi  \right\rangle =\sum\limits_{k=1}^{n}{{{\alpha }_{k}}\left| {{{\hat{\psi }}}_{k}} \right\rangle }$, where $\vert \psi \rangle $ indicates a state vector, ${\alpha }_{k}$ are the complex-valued expansion coefficients, and $\vert {\hat \psi}_k \rangle $ are basis vectors that are fixed. The notation bra is defined such that $\left \langle \psi  \right |= \left| \psi  \right\rangle^\dagger$. In addition, the notation $\left|  \rho  \left. \right\rangle\right\rangle$ indicates the vectorized form of the density operator in the Fock-Liouville space, and the vectorization operator is shown by $vec$. For writing partial differential equations, we denote partial derivatives using subscripts. In the general situation where $f$ denotes a function of $n$ variables including $x$, then $f_x$ denotes the partial derivative relative to the $x$ input. The sign $\otimes$ indicates the tensor product. Finally, the notation $[\cdot,\cdot]$ represent a commutator and $\{\cdot,\cdot\}$ is a Poisson bracket. Throughout the paper, the imaginary unit is $i = \sqrt{-1}$. 

\section{The Lindbladian dynamics equation}
The general mathematical tool to describe our knowledge of the state of an n-level quantum system is through the density operator $\rho$, which is a Hermitian positive semi-definite operator of trace one acting on the Hilbert space $\mathbb{H}$ of the system. In quantum mechanics, the evolution along time $t$ of density operator through the Lindblad (or Gorini-Kossakowski-Sudarshan-Lindblad) master equation represents the most extensive generator of Markovian dynamics which takes the form, \cite{dehaghani2022quantum},
\begin{equation}\label{1}
\!\!\! \!\!\!\dot{\rho }\left( t \right)=-{i}\left[ H\left( u(t) \right),\rho \left( t \right) \right]+\sum\limits_{k}{{{\gamma }_{k}}\left[ {{L}_{k}}\rho L_{k}^{\dagger }-\frac{1}{2}\left\{ L_{k}^{\dagger }{{L}_{k}},\rho  \right\} \right]}\!
\end{equation}
in which the first term represents the unitary evolution of the quantum system, where $H\left( u(t) \right)$ is a Hermitian operator called the quantum-mechanical Hamiltonian defined as 
\begin{equation*}
H\left( u(t) \right)=
\underbrace{diag(E_1,E_2,\cdots,E_n)}_{{H}_{d}}+\underbrace{\sum\limits_{l=1}^m{{{u}_{l}}(t)}{{H}_{l}} }_{{H}_{C}}
\end{equation*}
where $H_d$ is diagonal implying that the basis corresponds to the eigenvectors, and $E_i$ represents a real number concerning the energy level. The control $u_l(t)\in \mathbb{R}$ demonstrates a set of external functions coupled to the quantum system via time independent interaction Hamiltonians ${{H}_{l}}$. The second term of \eqref{1} represents the dissipative part of the state evolution, where $L_{k}$ indicates the sequence of arbitrary Lindblad operators, and ${\gamma }_{k}\ge 0$ is the damping rate.
Master equations can be bothersome due to the commutation term and the Poisson bracket. Since convex combination gives the preservation of the trace and positive definiteness, it is possible to create a Hilbert space of density operators via defining a scalar product. The next result provides a linear space of density operators, called Fock-Liouville space, through which we present a solution of the master equation via vectorization such that the resulting Liouvillian superoperator is governed by a linear system.
\begin{prop}
Consider the Lindbladian dynamical system \eqref{1}. There exists a linear isomorphism, coordinate transformation, such that the transformed system is linear in the states and real.
\end{prop}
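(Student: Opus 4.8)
The plan is to realize the transformation in two stages: a vectorization that renders the dynamics linear in the (now vector-valued) state, followed by a change of basis that renders everything real. For the first stage I would apply the column-stacking operator $vec$ to both sides of \eqref{1} and invoke the Kronecker identity $vec(ABC) = (C^T \otimes A)\,vec(B)$. Term by term this sends the commutator $[H(u),\rho]$ to $(I\otimes H(u) - H(u)^T \otimes I)\,vec(\rho)$, the gain term $L_k \rho L_k^{\dagger}$ to $(\overline{L_k}\otimes L_k)\,vec(\rho)$, and the anticommutator $\{L_k^{\dagger} L_k,\rho\}$ to $(I\otimes L_k^{\dagger} L_k + (L_k^{\dagger} L_k)^T\otimes I)\,vec(\rho)$. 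Collecting terms yields $\tfrac{d}{dt}|\rho\rangle\rangle = \mathcal{L}(u)\,|\rho\rangle\rangle$ with a Liouvillian superoperator $\mathcal{L}(u)=\mathcal{L}_0 + \sum_l u_l \mathcal{L}_l$ that is manifestly linear in the vectorized state (and affine/bilinear in the control), which establishes the ``linear in the states'' part of the claim.

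The second stage addresses reality. The vectorized system above is complex, both because of the factor $-i$ and because $H$ and the $L_k$ may themselves be complex. The key structural fact I would exploit is that the Lindblad generator preserves Hermiticity: if $\rho=\rho^{\dagger}$ then $\dot\rho=\dot\rho^{\dagger}$. Consequently the flow never leaves the real vector space $\mathfrak{h}_n$ of Hermitian $n\times n$ matrices, which has real dimension $n^2$. I would therefore fix an orthonormal Hermitian basis $\{F_0,F_1,\dots,F_{n^2-1}\}$ with respect to the Hilbert--Schmidt inner product --- for instance $F_0 = I/\sqrt{n}$ together with the normalized generalized Gell-Mann (Pauli, for $n=2$) matrices --- and write $\rho = \sum_j r_j F_j$ with coordinates $r_j = \mathrm{Tr}(F_j \rho) \in \mathbb{R}$. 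The map $\rho \mapsto (r_0,\dots,r_{n^2-1})$ is a linear isomorphism $\mathfrak{h}_n \to \mathbb{R}^{n^2}$ by completeness and orthonormality of the basis, and this is precisely the coordinate transformation asserted by the proposition.

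It then remains to check that the induced dynamics on the coordinate vector $\vec r$ is real and linear. Projecting the master equation onto $F_j$ gives $\dot r_j = \sum_k \big[\mathrm{Tr}(F_j \,\mathcal{G}(u)[F_k])\big]\, r_k$, where $\mathcal{G}(u)$ denotes the full (Hamiltonian plus dissipative) generator acting at the level of matrices. Because $\mathcal{G}(u)$ maps Hermitian matrices to Hermitian matrices, each $\mathcal{G}(u)[F_k]$ lies in $\mathfrak{h}_n$, and the Hilbert--Schmidt overlap of two Hermitian matrices is real; hence every structure coefficient $\mathrm{Tr}(F_j\,\mathcal{G}(u)[F_k])$ is real and one obtains a real linear system $\dot{\vec r} = A(u)\,\vec r$.

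The step I expect to be the main obstacle --- and the one deserving the most care --- is precisely this reality argument, since the naive vectorization produces a complex $\mathcal{L}(u)$ and it is not self-evident that a real coordinate system should absorb the imaginary unit. The resolution hinges on verifying Hermiticity preservation explicitly for each piece: that $-i[H,F_k]$ is Hermitian because $H$ and $F_k$ are (the commutator of two Hermitian matrices is anti-Hermitian, and multiplication by $-i$ restores Hermiticity), and that the dissipator is Hermiticity-preserving by its Kraus/GKS structure. Equivalently, one may show that $\mathcal{L}(u)$ commutes with the conjugation-and-swap symmetry characterizing vectorized Hermitian matrices, so that restricting to that invariant subspace yields the real representation. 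Once Hermiticity preservation is secured, the realness of the structure constants, and hence of $A(u)$, follows routinely.
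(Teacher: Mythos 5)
Your proof is correct, and its first stage coincides exactly with the paper's: the column-stacking vectorization together with $\mathrm{vec}(ABC)=(C^{T}\otimes A)\,\mathrm{vec}(B)$ reproduces term by term the paper's Liouvillian $\mathcal{L}=-i\left( I\otimes H-H^{T}\otimes I \right)+\sum_{k}\gamma_{k}\left[ L_{k}^{*}\otimes L_{k}-\tfrac{1}{2}I\otimes L_{k}^{\dagger}L_{k}-\tfrac{1}{2}\left( L_{k}^{\dagger}L_{k}\right)^{T}\otimes I \right]$. Where you genuinely diverge is the reality step. The paper never invokes Hermiticity preservation: it simply realifies the complex vector space, stacking $\operatorname{Re}(|\rho\rangle\rangle)$ and $\operatorname{Im}(|\rho\rangle\rangle)$ into a vector of $\mathbb{R}^{2n^{2}}$ and replacing $\mathcal{L}$ by the block matrix $\left( \begin{smallmatrix} \operatorname{Re}(\mathcal{L}) & -\operatorname{Im}(\mathcal{L}) \\ \operatorname{Im}(\mathcal{L}) & \operatorname{Re}(\mathcal{L}) \end{smallmatrix} \right)$, which makes the transformed system real for \emph{any} complex linear ODE, with no structural input from the Lindblad form. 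You instead restrict to the real $n^{2}$-dimensional space $\mathfrak{h}_{n}$ of Hermitian matrices, prove that the generator leaves it invariant, and expand in an orthonormal Hermitian (Gell-Mann/Pauli) basis to obtain the Bloch (coherence-vector) form $\dot{\vec r}=A(u)\vec r$ with real structure constants. Both constructions are legitimate real-linear isomorphisms establishing the proposition. The trade-off: the paper's route is purely mechanical and needs no lemma, but doubles the dimension to $2n^{2}$ and carries redundant coordinates (the vectorized-Hermitian subspace is invariant but never exploited, so Hermiticity must be imposed through the initial condition); your route costs the extra Hermiticity-preservation argument --- which you verify correctly for the commutator, the Kraus term, and the anticommutator --- but delivers the minimal real representation of dimension $n^{2}$, with the Hermiticity constraint built into the coordinates themselves. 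For the downstream optimal control and PINN machinery of this paper either state space would work, though yours would halve the number of state (and costate) variables to be learned.
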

\begin{proof} The result is proved by providing a coordinate transformation given by the following composition of 3 linear isomorphisms (whose notation is defined in the sequel):
    \begin{inparaenum}[i)]
    \item  \label{itm:first}  $\mathcal{L}:~\rho \mapsto \mathcal{L}   \rho$, \quad
    \item \label{itm:second}  $\left| \cdot  \left. \right\rangle\right\rangle:~\rho \mapsto \left| \rho  \left. \right\rangle\right\rangle$,\\
    \item \label{itm:third} $\tilde \cdot:~ v \mapsto \tilde v := 
\begin{bmatrix} \operatorname{Re}(v) & \operatorname{Im}(v)\\ \end{bmatrix}^T$\\
\end{inparaenum} 
and finally $\tilde{\mathcal{L}} = \tilde \cdot \circ \left| \cdot  \left. \right\rangle\right\rangle \circ \mathcal{L} \circ \left| \cdot  \left. \right\rangle\right\rangle^{-1} \circ \tilde \cdot ^{-1}$.
To obtain \ref{itm:first}, we apply the Choi-Jamiolkowski isomorphism for vectorization through the mapping $\left| \cdot  \left. \right\rangle\right\rangle:~ \left| i \right \rangle \left\langle  j \right|\mapsto \left| j \right\rangle \otimes \left| i \right\rangle$, to \eqref{1} obtaining the Liouville superoperator acting on the Hilbert space of density operator as
$\left| \dot{\rho }  \left. \right\rangle\right\rangle=\mathcal{L} \left|  \rho  \left. \right\rangle\right\rangle$ where
\begin{equation}\label{5}
\begin{aligned}
 \mathcal{L}=&-i\left( I\otimes H-{{H}^{T}}\otimes I \right) \\ 
 & +\sum\limits_{k}{{{\gamma }_{k}}\left[ L_{k}^{*}\otimes {{L}_{k}}-\frac{1}{2}I\otimes L_{k}^{\dagger }{{L}_{k}}-\frac{1}{2}{{\left( L_{k}^{\dagger }{{L}_{k}} \right)}^{T}}\otimes I \right]} \\ 
\end{aligned}
\end{equation}
in which $I$ is the identity matrix. To show \ref{itm:second}, note that, for an arbitrary density operator $\rho =\sum\limits_{i,j}{{{\rho }_{i,j}}\left| i \right\rangle \left\langle  j \right|}$, its vectorized form is given by $   \left|  \rho  \left. \right\rangle\right\rangle=\sum\limits_{i,j}{{{\rho }_{i,j}}\left| j \right\rangle \otimes }\left| i \right\rangle$. Finally to get \ref{itm:third}, we implement the system state and superoperator as 
\begin{equation*}
\left|\left.\Tilde{\rho} \right\rangle\right\rangle =\left[ \begin{matrix}
   \operatorname{Re}\left(\left|\left.{\rho} \right\rangle\right\rangle \right)  \\
   \operatorname{Im}\left(\left|\left.{\rho} \right\rangle\right\rangle  \right)  \\
\end{matrix} \right], \quad \tilde{\mathcal{L}}:=\left( \begin{matrix}
   \operatorname{Re}\left( \mathcal{L}  \right) & -\operatorname{Im}\left( \mathcal{L} \right)  \\
   \operatorname{Im}\left( \mathcal{L} \right) & \operatorname{Re}\left( \mathcal{L} \right)  \\
\end{matrix} \right)
\end{equation*}
so that we obtain 
\vspace{-0.1cm}
\begin{equation}\label{8}
   \left| \dot{\Tilde{\rho} }  \left. \right\rangle\right\rangle=\tilde{\mathcal{L}} \left|  {\Tilde{\rho}}  \left. \right\rangle\right\rangle
   \vspace{-0.3cm}
\end{equation}
\end{proof}
We consider \eqref{8} as the system dynamics in the rest of paper.

\section{OPTIMAL CONTROL FORMULATION}
This section describes the problem formulation. To this end, we have first to describe the state constraint that arises from the concept of quantum purity.
\subsection{Quantum purity}
Quantum purity is a fundamental property of a quantum state. For pure states, the purity $P=1$, while $P<1$ shows that the quantum state is mixed. In dissipative quantum systems, a state may be initialized as pure, i.e., $\rho =\left| \psi  \right\rangle \left\langle  \psi  \right|$, and then due to the interaction with the environment and through a channel $\chi$ is decohered and mapped to a mixed state. A quantum channel $\chi$ over the space $\mathbb{H}$ represents a completely positive trace-preserving quantum map, i.e., $\chi \in \textit{CPTP} (\mathbb{H})$. Hence, the purity of the channel $\chi$ 
is considered as just the purity of the state $\rho$. Therefore, one can write 
\begin{equation*}
P(\rho)=tr\left( \chi {{\left( \rho  \right)}^{2}} \right)=tr\left( {{\rho }^{2}} \right)={{\left( vec\left( {{\rho }^{\dagger }} \right) \right)}^{\dagger }}vec\left( \rho  \right)
\end{equation*}
which leads to $\tilde{P}(\rho)=\left\langle\left\langle {{{\tilde{\rho }}}^{\dagger }}\left| {\tilde{\rho }} \right. \right\rangle\right\rangle$.
Preservation or maximization of the purity of a state transmitted through a quantum channel, i.e., a dissipative quantum system, is an important objective in quantum information processing. To do so, several decoherence-reduction techniques, such as quantum error correcting codes, and decoherence-free subspaces have been developed. In this work, we keep quantum purity above some predefined level by imposing the constraint as
\begin{equation*}
 \alpha {{P}_{0}}\le \Tilde{P}(\rho)\le {{P}_{0}}, \quad 0<\alpha <1
\end{equation*}
where $P_0=P(\rho_0)$ is the purity of the initial state. 
\subsection{Problem Formulation}
Quantum operations such as quantum state transition need to be done in the shortest possible time. However, due to the inverse relation between control time and amplitude, a fast operation may cause a very large control amplitude, which is practically impossible. The methods to design the quantum optimal controller vary according to the choice of the cost functional, the construction of the Pontryagin-Hamiltonian function, and the computation scheme to solve the PMP optimality conditions. Here, we deal with a time- and energy minimization state constrained optimal control problem ($P$) with bounded control aiming to transfer the initial state $\left|\left.{\Tilde{\rho}} (t_0)\right\rangle\right\rangle ={{\rho }_{0}}$ to a desired target state $\left|\left.{\Tilde{\rho}} (t_f)\right\rangle\right\rangle ={{\rho }_{f}}$. The problem casts as the following:
\begin{equation*}
(P)\left\{ \begin{aligned}
  &\min_{u,t_f}  \left\{J = \Gamma\, {{t}_{f}}+\eta\int_{{{t}_{0}}}^{t_f}{{{u}^{2}}\left( t \right)\,dt}\right\}\\ 
  &\text {subject to }  \\ 
  &\left|\left.\dot{\Tilde{\rho}}(t)\right\rangle\right\rangle=\tilde{\mathcal{L}} \left|  {\Tilde{\rho}}  \left. \right\rangle\right\rangle  \text{ a.e. } t\in [t_0,t_f] \\
  & \left|\left.{\Tilde{\rho}} (t_0)\right\rangle\right\rangle ={{\rho }_{0}}\in {{\mathbb{R}}^{4n}}\\ 
  & u\left( t \right)\in \mathcal{U}:=\left\{ u\in {{L}_{\infty }}:u\left( t \right)\in \Omega \subset {{\mathbb{R}}} \right\}\\
  & \Omega=[{{u}_{\min }},{{u}_{\max}}]  \text{ a.e. } t\in [t_0,t_f]\\   
  & h\left( \left|\left.\Tilde{\rho} (t)\right\rangle\right\rangle \right)\le 0 \text{ for all } t\in [t_0,t_f] \\
\end{aligned} \right.
\end{equation*}
where $\Gamma$ and $\eta$ in the performance index $J$ are positive coefficients, and ${{t}_{f}}$ shows the free final time to be optimized. The second term of $J$ is a common choice for the cost functional in molecular control, which measures the energy of the control field in the interval $[t_0,t_f]$. The control is represented as a measurable bounded function. 
The inequality $ h\left( \left|\left.\Tilde{\rho} (t)\right\rangle\right\rangle \right)\le 0$ defines the state constraints for the density operator - see the specific example in \eqref{18}. 
In this setup, we assume that all the sets are Lebesque measurable and the functions are Lebesgue measurable and Lebesgue integrable. The goal is to obtain a pair $\left( {{\rho }^{*}},{{u}^{*}} \right)$ which is optimal in the sense that the value of cost functional is the minimum over the set of all feasible solutions. 
\begin{remark}
It is important to assert the existence of a solution to Problem (P) in the class of measurable controls. Following the Filippov’s theorem, \cite{Filippov}, since the right-hand side of the dynamical system is linear with respect to the control, and the set of control values is convex and compact, then one can conclude that there exists a feasible control process to this problem. 
\end{remark}

\section{Pontryagin’s Minimum Principle in Gamkrelidze’s form with saturation functions}
To deal with the indicated problem, one can identify two Lagrangian multipliers: $\mu$, and $\lambda$, where
\begin{itemize}
  \item $\mu$ is bounded variation, non-increasing $\mu:\left[ t_0,t_f \right]\to {{\mathbb{R}}^{2}}$, such that $\mu(t)$ is constant on the time interval in which the state constraint is inactive. 
   \item $\left| \left. \lambda  \right \rangle \right\rangle:\left[ t_0,t_f \right]\to {{\mathbb{R}}^{4n}} $ is the time-varying Lagrange multiplier vector, whose elements are called the costates of the system.
\end{itemize}
We handle control constraints with saturation functions, \cite{graichen2010handling}, such that the indicated inequality-constraint for control is transformed into a new equality-constraint. To do so, we define a new unconstrained control variable $\nu(t)$, and substitute the control constraint with a smooth and monotonically increasing saturation function $\phi: \mathbb{R} \to (u_{\min},u_{\max})$ such that 
\begin{equation*}
   {\phi }\left( {\nu} \right)=u_{max}-\frac{u_{max}-u_{min}}{1+\exp \left( s{{\nu}} \right)}\quad\text{with}\quad s=\frac{c}{u_{\max}-u_{\min}}
\end{equation*}
in which $c>0$ is a constant parameter, useful for modifying the slope of $\phi(\nu)=0$. The advantage of using a saturation
function is that it is defined within the range of $\Omega$, and asymptotically approaches the saturation limits for $\nu \to \pm \infty$.
The next steps are the following:
\begin{itemize}
 \item We add a regularization term to the cost functional $J$ via a regularization parameter $\alpha$, and define the new cost functional as $\Tilde{J}=J+\alpha \int_{t_0}^{t_f}\nu^{2}(t)dt$, and solve the optimal control problem successively by decreasing $\alpha_{k}$. We use the result that if $u_{k+1}$ and $u_{k}$ are the optimal control inputs for $\alpha_{k+1}<\alpha_{k}$, then with $\underset{k\to \infty }{\mathop{\lim }}\,{{\alpha }_{k}}=0$, $\Tilde{J}(u_{k},{\alpha }_{k})$ converges to a non-increasing optimal cost, \cite{graichen2010handling}. In other words, by bringing $\alpha$ closer to 0, we approach to the original problem.
\item We consider an additional optimality condition for the new control variable by minimizing the Pontryagin-Hamilton function with respect to $\nu$. Moreover, 
we need to consider the constraint equation
 \vspace{-0.3cm}
      \begin{equation}\label{16}
        u(t)-\phi(\nu)=0
        \vspace{-0.2cm}
      \end{equation}
      for the boundary value problem.  
\item We introduce a multiplier $\beta:\left[ {{t}_{0}},{{t}_{f}} \right]\to \mathbb{R}$ to take the  equality constraint into account.      
\end{itemize}
In this new setup, we will now obtain the optimality conditions.
To this end, we first construct the Pontryagin Hamiltonian $\mathcal{H}$ defined for all $t\in \left[ {{t}_{0}},{{t}_{f}} \right]$ by
\begin{equation}\label{20}
\begin{aligned}
&\!\!\!{\mathcal{H}}(\rho,\lambda,u, \nu, \delta, \alpha, \beta ,t)\!=\!(\left| \left. \lambda \left( {t} \right) \right\rangle\right\rangle-2\delta(t)\left|\left.\Tilde{\rho}(t)\right\rangle\right\rangle)^T\Tilde{\mathcal{L}}\left|\left.\Tilde{\rho}(t)\right\rangle\right\rangle\\
 &\quad+\eta {{u}^{2}}\left( t \right)+\alpha{{\nu^{2}(t) }}+{\beta(t)\left(u(t)-{{\phi }}\left( \nu  \right) \right)} 
\end{aligned}
\end{equation}
where $\delta \left( t \right)=\left[ \begin{smallmatrix}
   1 & -1  \\
\end{smallmatrix} \right]{\mu(t)}$.
\begin{prop}
Consider the optimal control problem (P1) that is similar to (P), but with cost function $\Tilde{J}$, and the additional constraint \eqref{16}. Let $u^\star(t)$ be an optimal control and $\Tilde{\rho}^\star(t)$ the corresponding state trajectory response. Then, there exist the multiplier ${{\lambda }^{\star}}(t)$ that together with ${\delta }, \beta:\left[ {{t}_{0}},{{t}_{f}} \right]\to \mathbb{R}$ satisfy the PMP necessary conditions. 
More precisely, 
\begin{equation*}
    {\mathcal{H}}(\rho^\star,\lambda^\star,u^\star, \nu, \mu, \alpha, \beta ,t)\le{\mathcal{H}}(\rho^\star,\lambda^\star,u, \nu, \mu, \alpha, \beta ,t)
\end{equation*}
for all $t \in [t_0,t_f]$, and all feasible controls $u\in \Omega$. 
Moreover, 
\begin{equation*}
\begin{aligned}
& \frac{\partial {\mathcal{H}}}{\partial u}=(\left| \left. \lambda^\star\right\rangle\right\rangle - 2 \delta \left|\left.\Tilde{\rho}^\star \right\rangle\right\rangle)^{{T}} \tilde{\mathcal{L}}_u\left|\left. \tilde{\rho}^\star \right\rangle\right\rangle+2\eta {{u}^\star}+\beta=0\\
& \frac{\partial {\mathcal{H}}}{\partial \nu}=2\alpha\nu-\beta\frac{{{\partial\phi\left( \nu \right) }}}{\partial \nu}=0\\
\end{aligned}    
\end{equation*}
The remaining first-order necessary conditions for the state and costate variables are given as 
\begin{equation*}
\begin{aligned}
&\!\!{\left| \left. \dot { \tilde{\rho}} ^{\star} \right\rangle\right\rangle}=\frac{\partial {\mathcal{H}}}{\partial \left| \left. \lambda \right\rangle \right\rangle}=\Tilde{\mathcal{L}} \left|  {\Tilde{\rho}}^{\star}  \left. \right\rangle\right\rangle, \quad \left|  \left. \Tilde{\rho}^{\star} \left( {{t}_{0}} \right) \right\rangle \right\rangle = {{\rho }_{0}} \\
&\!\!\dot{ \left| \left. {{\lambda }^{\star}} \right\rangle\right\rangle}^T\!\!\!=\!\!\frac{-\partial {\mathcal{H}}}{\partial {\left| \left.  { \tilde{\rho }} \right\rangle\right\rangle}}\!=\!\Tilde{\mathcal{L}}^T \left| \left. \lambda^{\star}  \right\rangle\right\rangle -4\delta\Tilde{\mathcal{L}} \left| \left. \rho ^{\star}  \right\rangle\right\rangle, \left|\left.  \lambda^{\star} \left( t_f\right) \right\rangle\right\rangle = \lambda_f=\bf{0}\\
\end{aligned}
\end{equation*}
In addition, the transversality condition imposes that
\begin{equation*}
    {\mathcal{H}}\left( t_f \right)+\Gamma=0
\end{equation*}
\end{prop}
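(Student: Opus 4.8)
The plan is to reduce the doubly-constrained problem (P1), which carries both a pointwise control constraint and a pointwise purity state constraint, to a \emph{smooth, equality-constrained, free-final-time} problem to which the classical Pontryagin Minimum Principle applies directly, and then to read off each asserted condition. Two reductions are required. First, the control box $u\in[u_{\min},u_{\max}]$ is eliminated by the saturation substitution $u=\phi(\nu)$ with $\nu\in\mathbb{R}$ unconstrained; because $\phi$ maps $\mathbb{R}$ diffeomorphically onto $(u_{\min},u_{\max})$, admissibility of $u$ is automatic, and the only surviving relation is the equality \eqref{16}, which I append to the Hamiltonian through the multiplier $\beta$. Second, the purity constraint $h(\tilde\rho)\le 0$ is treated in Gamkrelidze revisited form, where the bounded-variation multiplier $\mu$ is paired with the total time-derivative of the active constraint rather than with the constraint value itself.

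For the Gamkrelidze step I would use the explicit form of the purity: in the real coordinates of Proposition 1 one has $\tilde P(\rho)=\langle\langle\tilde\rho^\dagger|\tilde\rho\rangle\rangle=\tilde\rho^T\tilde\rho$, so its derivative along \eqref{8} is $\dot{\tilde P}=2\tilde\rho^T\tilde{\mathcal{L}}\tilde\rho$. Contracting the two-component multiplier with $\dot h$ through $\delta=[\,1\ -1\,]\mu$ reproduces precisely the term $-2\delta\tilde\rho^T\tilde{\mathcal{L}}\tilde\rho$ in $\mathcal{H}$, which is why the costate dynamics are driven by the shifted vector $\lambda-2\delta\tilde\rho$. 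The requirements that $\mu$ be non-increasing and constant wherever the constraint is inactive then encode the sign condition and complementary slackness of the state-constraint multiplier.

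With both reductions in place the augmented problem has only the dynamics and \eqref{16} as constraints and a Hamiltonian that is smooth in $(u,\nu)$, so the classical PMP applies verbatim. The minimum inequality is its pointwise statement; since $\mathcal{H}$ is strictly convex in $u$ (via $\eta u^2$) and in $\nu$ (via $\alpha\nu^2$), the minimizer is interior and the two stationarity conditions $\partial\mathcal{H}/\partial u=0$ and $\partial\mathcal{H}/\partial\nu=0$ follow by differentiation. The state equation is immediate from $\partial\mathcal{H}/\partial\lambda$, and the costate equation comes from differentiating $(\lambda-2\delta\tilde\rho)^T\tilde{\mathcal{L}}\tilde\rho$ in $\tilde\rho$; here I would note that the raw gradient of the $\delta$-term yields $2\delta(\tilde{\mathcal{L}}+\tilde{\mathcal{L}}^T)\tilde\rho$, so recovering the stated single-term correction presupposes either a symmetry of $\tilde{\mathcal{L}}$ on the constraint manifold or the chosen sign convention, a point I would check carefully. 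The transversality identity is the last piece: since the only explicit $t_f$-dependence of $\tilde J$ is the linear term $\Gamma t_f$, the free-final-time condition $\mathcal{H}(t_f)=-\partial(\Gamma t_f)/\partial t_f$ gives $\mathcal{H}(t_f)+\Gamma=0$, while the terminal costate value follows from the endpoint transversality relation dictated by how the terminal state is posed in (P1).

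The hard part will not be the differentiations but the justification of the Gamkrelidze multiplier. One must verify a constraint qualification ensuring that a regular bounded-variation (measure) multiplier $\mu$ exists, and confirm that the purity constraint has the order the method assumes --- namely that its first total derivative already involves the control/dynamics, which the computation $\dot{\tilde P}=2\tilde\rho^T\tilde{\mathcal{L}}\tilde\rho$ confirms. Establishing the monotonicity and complementary-slackness properties of $\mu$, and reconciling them with the successive-regularization argument ($\alpha_k\to 0$) that links (P1) back to (P), is the delicate step; once these are secured, the remaining assertions reduce to a routine application of the smooth PMP to the transformed system.
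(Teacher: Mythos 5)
Your proposal follows essentially the same route as the paper's proof: fold the purity constraint into the adjoint via the Gamkrelidze substitution $\lambda \mapsto \lambda - \mu^{T}\nabla_{\tilde\rho}h = \lambda - 2\delta\tilde\rho$ (the paper pairs $\mu$ with $\nabla_{\tilde\rho}h$ directly, which is algebraically identical to your pairing of $\mu$ with $\dot h = (\nabla h)^{T}\tilde{\mathcal{L}}\tilde\rho$), append the saturation equality \eqref{16} with the multiplier $\beta$, and then read every stated condition off the classical smooth PMP with free final time. Your two caveats --- the constraint qualification needed for the bounded-variation multiplier $\mu$, and the observation that the raw $\tilde\rho$-gradient of the $\delta$-term is $2\delta\bigl(\tilde{\mathcal{L}}+\tilde{\mathcal{L}}^{T}\bigr)\tilde\rho$, which matches the stated $4\delta\tilde{\mathcal{L}}\tilde\rho$ only under a symmetry assumption on $\tilde{\mathcal{L}}$ --- are points the paper's one-paragraph proof passes over silently, so your version is, if anything, the more careful one.
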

\begin{proof} For the indicated set of conditions, the extended Hamilton-Pontryagin function is 
\begin{equation*}
\begin{aligned}
&{\mathcal{H}}(\rho,\lambda,u, \nu, \mu, \alpha, \beta ,t)=(\left| \left. \lambda \left( {t} \right) \right\rangle\right\rangle - \mu^T(t) {{\nabla }_{\Tilde{\rho }}}h\left( \left|\left.\Tilde{\rho} (t)\right\rangle\right\rangle \right)^T\\
&\qquad\Tilde{\mathcal{L}}\left|{\tilde{ \rho} \left( {t} \right) }\left.\right\rangle\right\rangle
+\eta {{u}^{2}}\left( t \right)+\alpha{{\nu(t) }^{2}}+{\beta(t)\left(u(t)-{{\phi }}\left( \nu\right) \right)}\\ 
\end{aligned}
\end{equation*}
in which the state constraint $ h\left( \left|\left.\Tilde{\rho} (t)\right\rangle\right\rangle \right)$ and its multiplier $\mu(t)$ are expressed as
\begin{equation}\label{18}
 {h\left( \left|\left.\Tilde{\rho} (t)\right\rangle\right\rangle \right)}=
 \left( \begin{smallmatrix}
  \Tilde{P}[\chi(\rho)]-{{P}_{0}}  \\
   \alpha {{P}_{0}}-\Tilde{P}[\chi(\rho)] \\
\end{smallmatrix} \right)\le 0, \quad 
 \mu(t)=\left( \begin{smallmatrix}
   {{\mu }_{1}}\left( t \right) \\ {{\mu }_{2}}\left( t \right)  \\
\end{smallmatrix} \right)
\end{equation}
Therefore, ${{\nabla }_{\Tilde{\rho }}}{h\left( \left|\left.\Tilde{\rho} (t)\right\rangle\right\rangle \right)}=\left( \begin{matrix}
   {\nabla }_{\Tilde{\rho }}  \Tilde{P}[\chi(\rho)] &
  -{\nabla }_{\Tilde{\rho }}  \Tilde{P}[\chi(\rho)] \\
\end{matrix} \right)^T$, where $ {\nabla }_{\Tilde{\rho }}  \Tilde{P}[\chi(\rho)]=2\left|\left.\Tilde{\rho} (t)\right\rangle\right\rangle$. Then, the Pontryagin Hamiltonian forms as written in \eqref{20} from which the necessary conditions according to the PMP have been indicated.
\end{proof}
\begin{prop}
Let $\rho^\star(t)$, $u^\star(t)$, $\lambda^\star(t)$ be an optimal trajectory of (P1). Then, $\rho^\star(t)$, $u^\star(t)$, $\lambda^\star(t)$  is a local minimizer of the Pontryagin Hamiltonian \eqref{20} as long as $\alpha$ satisfies the condition 
\vspace*{-3mm}
\begin{equation}\label{eq:alpha}
\alpha>\frac{\beta}{2} \epsilon, 
\end{equation}
where $\epsilon = \max_{\nu \in \mathbb R} |\frac{{{\partial }^{2}}\phi(\nu) }{\partial {{\nu }^{2}}}|$ is a finite bound.
\end{prop}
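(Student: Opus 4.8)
The plan is to read the statement as a second-order sufficiency result. The first-order necessary conditions $\partial\mathcal{H}/\partial u=0$ and $\partial\mathcal{H}/\partial\nu=0$ of the previous proposition already certify that $(u^\star,\nu^\star)$ is a stationary point of the Pontryagin Hamiltonian \eqref{20} in the control variables; what remains is to show that this stationary point is in fact a (strict) local minimum, which reduces to checking that the Hessian of $\mathcal{H}$ with respect to $(u,\nu)$ is positive definite there. I would therefore fix the optimal state $\Tilde{\rho}^\star$, costate $\lambda^\star$, and the multipliers $\delta,\beta$, and regard $\mathcal{H}$ purely as a function of the scalar pair $(u,\nu)$.

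The computation hinges on one structural observation: $\tilde{\mathcal{L}}$ is affine in $u$, because $H(u)=H_d+\sum_l u_l H_l$ is linear in $u$ while the dissipative part is $u$-independent. Hence the drift term $(\lambda-2\delta\Tilde{\rho})^T\tilde{\mathcal{L}}\Tilde{\rho}$ is linear in $u$ and vanishes under $\partial^2/\partial u^2$, leaving $\partial^2\mathcal{H}/\partial u^2=2\eta$ from the energy penalty. Since $\nu$ enters only through $\alpha\nu^2$ and $-\beta\phi(\nu)$, I get $\partial^2\mathcal{H}/\partial\nu^2=2\alpha-\beta\,\phi''(\nu)$, and because no single term couples $u$ and $\nu$ the mixed derivative is zero. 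The Hessian is thus the diagonal matrix $\operatorname{diag}(2\eta,\,2\alpha-\beta\phi''(\nu))$, so positive definiteness splits into $2\eta>0$, which holds automatically since $\eta>0$, and $2\alpha-\beta\phi''(\nu)>0$.

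To make the second inequality hold uniformly in $\nu$, I would bound $|\phi''(\nu)|\le\epsilon$ and require $\alpha>\tfrac{\beta}{2}\epsilon$, which is exactly \eqref{eq:alpha}. Finiteness of $\epsilon$ follows from the explicit saturation function: a direct differentiation gives $\phi''(\nu)=(u_{\max}-u_{\min})\,s^2\,e^{s\nu}(1-e^{s\nu})/(1+e^{s\nu})^3$, a continuous function that decays exponentially as $\nu\to\pm\infty$ and is therefore bounded on $\mathbb{R}$; its extremum, and hence $\epsilon$, can be located by solving $\phi'''(\nu)=0$.

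The part that needs the most care is the treatment of $\beta$ and the justification that the unconstrained $2\times2$ test is the right object. Because the equality constraint \eqref{16} has already been adjoined to $\mathcal{H}$ through the multiplier $\beta$, both $u$ and $\nu$ are genuinely free arguments of $\mathcal{H}$, so the unconstrained Hessian is indeed what governs local optimality and no tangent-space reduction is needed. Since the sign of $\beta$ is not fixed a priori, the rigorous uniform bound is $\sup_\nu\beta\phi''(\nu)\le|\beta|\epsilon$, giving $\alpha>\tfrac{|\beta|}{2}\epsilon$, which collapses to the stated condition \eqref{eq:alpha} when $\beta\ge0$. I expect the remaining step --- that a positive-definite Hessian at a stationary point yields a strict local minimizer via a second-order Taylor expansion of $\mathcal{H}$ in $(u,\nu)$ --- to be routine.
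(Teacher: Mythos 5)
Your proposal follows essentially the same route as the paper's proof: both reduce the claim to a second-order sufficiency test, compute the Hessian of $\mathcal{H}$ with respect to the combined control $(u,\nu)$, obtain the diagonal matrix $\operatorname{diag}\left(2\eta,\; 2\alpha-\beta\,\phi''(\nu)\right)$, and conclude positive definiteness from $\eta>0$, boundedness of $\phi''$, and condition \eqref{eq:alpha}. Your write-up is in fact somewhat more careful than the paper's --- you explicitly justify why the dynamics term contributes nothing to the Hessian (affinity of $\tilde{\mathcal{L}}$ in $u$), verify finiteness of $\epsilon$ by differentiating the saturation function, and note that a uniform bound strictly requires $\alpha>\tfrac{|\beta|}{2}\epsilon$ unless $\beta\ge 0$, a sign issue the paper passes over silently.
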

\begin{proof}
Consider the control input  $\Bar{u}=\left(u,\nu\right)$. According to the second-order sufficient condition for local optimality, if the Pontryagin Hamiltonian has a positive definite Hessian with respect to $\Bar{u}$, then it is guaranteed that $u^\star(t)$ is the local minimizer of the Hessian, i.e., the generalized Legendre-Clebsch (L-C) condition guarantees that over a singular arc, the Pontryagin Hamiltonian is minimized. In this case, the Hessian is given by
\vspace*{-3mm}
\begin{equation*}\label{hessian}
     \mathcal{H}_{\Bar{u}\Bar{u}}=\left( \begin{smallmatrix}
   2\eta  & 0  \\
   0 & 2\alpha -\beta \frac{{{\partial }^{2}}\phi }{\partial {{\upsilon }^{2}}}  \\
\end{smallmatrix} \right)>0.
\end{equation*}
Clearly, it is positive definite since $\eta$ is positive, $\frac{{{\partial }^{2}}\phi }{\partial {{\upsilon }^{2}}}$ is bounded and $\alpha$ satisfies \eqref{eq:alpha}.
\end{proof}
\vspace{-0.2cm}
\section{Physics-Informed Neural Networks Based on the Theory of Functional Connections}
In this section, we exploit the newly developed Pontryagin PINN method derived from the theory of functional connection, \cite{d2021pontryagin, johnston2021theory}. We give a short overview on how physics-informed neural networks derived from TFC can be used to solve BVPs. Consider a generic differential equation that needs to be solved to obtain the dynamic vector $y(t)\in \mathbb{R}^n$, being the solution of the system, as follows
\vspace{-0.19cm}
\begin{equation*}
{F}\left( t,{{y}}\left( t \right),{{{\dot{y}}}}\left( t \right)\right)=0, \quad
{{{{y}}}}\left( t_k\right)={y}_{{t}_k},\quad k\in\{\emptyset, 1, 2, \ldots \}\cdot
\end{equation*}
\vspace{-0.03cm}
Note that the set in $k$ can be empty. Table \ref{Tab} describes (an adaptation of) the main steps of the application of TFC method developed in \cite{d2021pontryagin, johnston2021theory}. 
\begin{table}
 \hrulefill
\caption{Modelling and training the PoNN (main steps)}\label{alg1}
\begin{algorithmic}
\State 1. 
Morphing transformation
 $$\tau={{\tau}_{0}}+c\left( t-{{t}_{0}} \right)\leftrightarrow t={{t}_{0}}+\frac{1}{c}\left( \tau-{{\tau}_{0}} \right), \quad c>0$$
 \vspace{-0.2cm}
\State 2. Derive the approximated formulations for $y(t)$
\begin{equation}\label{27}   
\hat{y}(\tau)=g\left(\tau\right)+\sum\limits_{n=1}^{{k}}{\Omega_{k}\left( \tau \right){(y(\tau_k)-g(\tau_k))}}
\end{equation}
 \vspace{-0.2cm}
\State 3. Obtain a new 
equation as a function of the independent variable $\tau$, $g\left( \tau \right)$, and its derivative
\begin{equation}\label{ftild}
F(\tau ,\hat{y}(\tau ),\dot{\hat{y}}(\tau ))=\tilde{F}(\tau ,{g}(\tau ),\dot{{g}}(\tau ))=0
\end{equation}
 \vspace{-0.2cm}
\State 4. Formulate $g\left( \tau \right)$ through a single layer neural network 
 \begin{equation}
   {g}\left( \tau \right)=\sum\limits_{l=1}^{L}{{{\alpha }_{l}}\sigma\left( {{\omega }_{l}}z+{{b}_{l}} \right)}=\xi^{T}{h}\left( z \right)
\end{equation}
 \vspace{-0.2cm}
\State 5. Compute the derivatives of ${{g}}\left( \tau \right)$ as ${\dot{g}(\tau)}=\frac{{{d}}{{g}}}{d{{\tau}}}=\xi^{T}\frac{{{d}}{{h}}\left( \tau \right)}{d{{\tau}}}{{c}}$.
and obtain \eqref{ftild} in terms of unknowns, so $\tilde{\tilde{F}}(\tau,\xi)=0$.
\State 6. Discretize $\tau$ into $N$ points, and construct the loss matrix ${{\mathbb{L}}}_{n\times N}$ through the obtained set of differential equations as
\begin{equation} \label{29.6}
    {{\mathbb{L}}}\left( {{\xi }} \right)=\left\{ {{\tilde{\tilde{F}}}}\left( {{\tau}_{0}},{{\xi }} \right), {{\tilde{\tilde{F}}}}\left( {{\tau}_{1}},{{\xi }} \right),\ldots, {{\tilde{\tilde{F}}}}\left( {{\tau}_{N-1}},{{\xi }} \right),{{\tilde{\tilde{F}}}}\left( {{\tau}_{N}},{{\xi }} \right) \right\}
\end{equation}
\State 7. In order to obtain the unknown $\xi$,
compute the solution of $\mathbb{L}=0_{n\times{N}}$.
\end{algorithmic}
\label{Tab}
 \hrulefill
 \vspace{-0.5cm}
\end{table}
In \eqref{27}, the function $g\left( \tau \right): \mathbb{R}\to \mathbb{R}^n$ indicates a user-specified function, and $\Omega$ is the so-called switching function. 
The free function ${g}\left( \tau \right)$, as developed in \cite{huang2006extreme} based on the theory of extreme learning machine (ELM), can be modeled by a single hidden layer feedforward neural network, where the summation is over all $L$ hidden neurons, and $\sigma(\cdot)$ is the activation function. The output weight $\alpha_l$ and the input weights $\omega_l=[\omega_{1},\omega_{2},\cdots,\omega_{L}]$ connect the $l$th hidden node to the output node and input nodes, respectively. The bias of the $l$th hidden node is shown by $b_l$. 
Now, let proceed with the solution of the two point BVP resulted from the necessary optimality conditions for ($P_1$) that must be satisfied simultaneously. Taking into account the procedure indicated in Table \ref{Tab}, we approximate the state and costate as
\begin{equation*}
\begin{aligned}
  {\left|\left. { \tilde{\rho }} \right\rangle\right\rangle}\left( \tau,\xi_{\rho}  \right)&={{\left( {{\sigma}_{{ \tilde{\rho }} }}\left( \tau \right)-{{\Omega}_{1}}\left( \tau \right){{\sigma}_{{ \tilde{\rho }} }}\left( {{\tau}_{0}} \right)-{{\Omega }_{2}}\left( \tau \right){{\sigma}_{{ \tilde{\rho }} }}\left( {{\tau}_{f}} \right) \right)}^{T}}{{\xi }_{{ \tilde{\rho }} }}\\
  &\quad+{{\Omega}_{1}}\left( \tau \right){ {{\rho }_{0}}}+{{\Omega}_{2}}\left( \tau \right){{{\rho }_{f}}} \\ 
  \left|  \left. \lambda \right\rangle\right\rangle \left( \tau,\xi_{\lambda} \right) &= {{\left( {{\sigma}_{\lambda}}\left( \tau \right)-{{\Omega }_{2}}\left( \tau \right){{\sigma}_{\lambda}}\left( {{\tau}_{f}} \right) \right)}^{T}}{{\xi }_{\lambda}}+{{\Omega }_{2}}\left( \tau \right){ {{\lambda}_{f}}}\\
\end{aligned}
\end{equation*}
where $\Omega_1$ and $\Omega_2$ are switching functions expressed by, \cite{johnston2020fuel},
\begin{equation*}
{\small
\begin{aligned}
&{{\Omega }_{1}}\left( \tau \right)=1+\frac{2{{\left( \tau -{{\tau }_{0}} \right)}^{3}}}{{{\left( {{\tau }_{f}}-{{\tau }_{0}} \right)}^{3}}}-\frac{3{{\left( \tau -{{\tau }_{0}} \right)}^{2}}}{{{\left( {{\tau }_{f}}-{{\tau }_{0}} \right)}^{2}}}\\
&{{\Omega }_{2}}\left( \tau \right)=-\frac{2{{\left( \tau -{{\tau }_{0}} \right)}^{3}}}{{{\left( {{\tau }_{f}}-{{\tau }_{0}} \right)}^{3}}}+\frac{3{{\left( \tau -{{\tau }_{0}} \right)}^{2}}}{{{\left( {{\tau }_{f}}-{{\tau }_{0}} \right)}^{2}}}\\
\end{aligned}}
\end{equation*}
By taking the derivatives of the approximated expressions of state and costate equation we obtain
\begin{equation*}
\begin{aligned}
&{\left|\left. \dot { \tilde{\rho }} \right\rangle\right\rangle}\left( \tau ,\xi_{\rho}  \right)\!=\!c{{\left( {{\sigma}^{\prime}_{\Tilde{\rho}}}\left( \tau  \right)-{{\Omega }_{1}^{\prime}}\left( \tau  \right){{\sigma}_{\Tilde{\rho}}}\left( {{\tau }_{0}} \right)-{{\Omega }_{2}^{\prime}}\left( \tau  \right){{\sigma}_{\Tilde{\rho}}}\left( {{\tau }_{f}} \right) \right)}^{T}}\!{{\xi }_{\Tilde{\rho}}}\\
&\quad+{{\Omega }_{1}^{\prime}}\left( \tau  \right){{{\rho }_{0}}}+{{\Omega }_{2}^{\prime}}\left( \tau  \right){{{\rho }_{f}} } \\ 
&\dot{ \left| \left. {{\lambda }} \right\rangle\right\rangle}  \left( \tau ,\xi_{\lambda} \right) \! = \!  c{{\left( {{\sigma}^{\prime}_{\lambda}}\left( \tau \right)-{{\Omega }_{2}^{\prime}}\left( \tau  \right){{\sigma}_{\lambda}}\left( {{\tau }_{f}} \right) \right)}^{T}}{{\xi }_{\lambda}}+{{\Omega}_{2}^{\prime}}\left( \tau \right){{{\lambda}_{f}} }\\
\end{aligned}
\end{equation*}
Regarding the control variables, the following functional approximation are introduced
\begin{equation*}
 u\left( \tau,\xi  \right)=\sigma_{u}^{T}\left( \tau \right){{\xi }_{u}}, \quad \nu\left(\tau,\xi \right)=\sigma_{\nu}^{T}\left( \tau \right){{\xi }_{\nu}}
\end{equation*}
We also expand the equality constraint multiplier $ \beta\left( \tau,\xi \right)=\sigma_{\beta}^{T}\left( \tau \right){{\xi }_{\beta}}$.
In addition, the state constraint multipliers are 
\begin{equation*}
 \mu_1\left(\tau,\xi  \right)=\sigma_{\mu_1}^{T}\left(\tau \right){{\xi }_{\mu_1}}, \quad  \mu_2\left(\tau,\xi \right)=\sigma_{\mu_2}^{T}\left( \tau \right){{\xi }_{\mu_2}} \\ 
\end{equation*}
In line with the ELM algorithm, \cite{huang2006extreme}, for the free final time problems, the vector of PoNNs' parameters to be learned is constructed as $    \xi =\left\{ \begin{smallmatrix}
   {{\xi }_{\Tilde{\rho} }} & {{\xi }_{\lambda }} & {{\xi }_{u}} & {{\xi }_{\nu}} & {{\xi }_{\beta}}& {{\xi }_{\mu_1}}& {{\xi }_{\mu_2}} & {c}  \\
\end{smallmatrix} \right\}^T$.
Now, we express the set of loss functions to be minimized as
\begin{equation*}
\begin{aligned}
&\mathbb{L}_{\Tilde{\rho}}={\left| \left. \dot { \tilde{\rho }} \right\rangle\right\rangle}-\Tilde{\mathcal{L}} \left|  {\Tilde{\rho}}  \left. \right\rangle\right\rangle\\
&\mathbb{L}_{\lambda}= \dot {\left| \left.{ \lambda} \right\rangle\right\rangle}^T-(\Tilde{\mathcal{L}}^T \left| \left. \lambda  \right\rangle\right\rangle -4\delta\Tilde{\mathcal{L}} \left| \left. \Tilde{\rho} \right\rangle\right\rangle)\\
&\mathbb{L}_u=(\left| \left. \lambda \right\rangle\right\rangle - 2 \delta \left|\left.\Tilde{\rho} \right\rangle\right\rangle)^{{T}} \tilde{\mathcal{L}}_u\left|\left. \tilde{\rho} \right\rangle\right\rangle+2\eta {{u}}+\beta\\
&\mathbb{L}_\nu=2\alpha\nu-\beta{{\phi }^\prime}\left( \nu \right), \quad 
\mathbb{L}_\phi=u-\phi(\nu)\\
&\mathbb{L}_{{\mathcal{H}}}={\mathcal{H}}\left( t_f \right)+\Gamma\\
&\mathbb{L}_{\mu_1}=\Tilde{P}[\chi(\rho)]-{{P}_{0}},\quad 
\mathbb{L}_{\mu_2}=\alpha {{P}_{0}}-\Tilde{P}[\chi(\rho)]  \\
\end{aligned}
\end{equation*}
leading to an augmented form of the loss function \eqref{29.6} as
\begin{equation}\label{51}
    \mathbb{L}= \left\{ \begin{matrix}
   \mathbb{L}_{\Tilde{\rho}} &  \mathbb{L}_{\lambda} & \mathbb{L}_u & \mathbb{L}_\nu & \mathbb{L}_\phi & \mathbb{L}_{{\mathcal{H}}} & \mathbb{L}_{\mu_1} & \mathbb{L}_{\mu_2}  \\
\end{matrix} \right\}
\end{equation}
Equation \eqref{51} can be solved by a numerical minimization scheme, and the PoNN's parameter will be learnt during the procedure. The iterative least-square method has proved to be an efficient scheme for such problem, \cite{d2021pontryagin}. Through this method, the estimation of $\xi$ is adjusted and refined to improve accuracy and convergence towards the desired outcome at $k+1$th iteration, such that ${{\xi }_{k+1}}={{\xi }_{k}}+\Delta {{\xi }_{k}}$, where $\Delta {{\xi }_{k}}=-{{\left( \mathbb{J}{{\left( {{\xi }_{k}} \right)}^{T}}\mathbb{J}\left( {{\xi }_{k}} \right) \right)}^{-1}}\mathbb{J}{{\left( {{\xi }_{k}} \right)}^{T}}\mathbb{L}({{\xi }_{k}})$, in which $\mathbb{J}$ is the Jacobian matrix, compiling the partial derivatives of the loss function with respect to each unknown parameter, and, therefore provides a complete representation of the sensitivity of the losses to changes in the unknowns. The iterative procedure continues to be repeated until the convergence criteria is satisfied, meaning that for a predefined tolerance we reach to ${{L}_{2}}\left[ \mathbb{L}\left( {{\xi }_{k}} \right) \right]<\varepsilon$.
\section{Simulation Results}
In the following, we show the feasibility of our results in a numerical study. Let consider a quantum system consisting of a two-level atom and a vacuum environment. The dissipation is captured via the decay of the atom through a photon emission, which is a result of the atom - environment interaction. The environment in this case is the surrounding vacuum state. Therefore, the system dynamics must be expressed through the master equation \eqref{1}, in which the term $H(u(t))$ describes the total atom-vacuum quantum-mechanical Hamiltonian as\begin{equation}
    H\left( u\left( t \right) \right)=E\left| 1 \right\rangle \left\langle  1 \right|+u\left( t \right)\left( \left| 0 \right\rangle \left\langle  1 \right|+\left| 1 \right\rangle \left\langle  0 \right| \right)
\end{equation}
where $u(t)$ is the driving control coherently switching between the two states. 
We work with the implemented dynamics in \eqref{8}. For this problem, the damping rate $\gamma$ is the atom-vacuum coupling, and the Liouvillian superoperator in the Fock-Liouvillian space is
\begin{equation}
    {\mathcal{L}}=\left( \begin{smallmatrix}
   0 & iu & -iu & \gamma   \\
   iu & -iE-\frac{\gamma }{2} & 0 & -iu  \\
   -iu & 0 & -iE-\frac{\gamma }{2} & iu  \\
   0 & -iu & iu & -\gamma   \\
\end{smallmatrix} \right)
\end{equation}
which is mapped to the extended superoperator $\tilde{\mathcal{L}}$.
\begin{figure}[t!]
  \centering
  \includegraphics[scale=0.62]{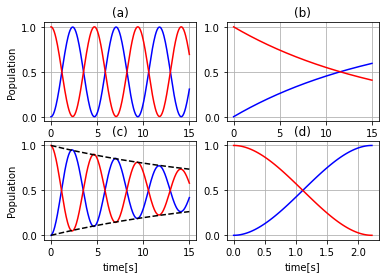}
  \caption{Population evolution under quantum dynamics. The blue and red lines represent the evolution of $\rho_{00}$ and $\rho_{11}$, respectively. (a) Coherent driving with no decay. (b) Decay with no coherent driving. (c) Coherent driving and decay. (d) Optimal coherent driving and decay. }
  \label{subs}
  \vspace{-0.5cm}
\end{figure}
We represent the results of \eqref{8} to show the element wise time behaviour of the density operator. For all results, the state is initiated as $\rho_{11}=1$, with no coherence between different states. First, we neglect the dissipation part, so the solution reduces to the resolution of the Liouville-von Neumann equation, see Fig \ref{subs}(a). As the next case, we consider the dissipation while there is no coherent driving. The population of the excited state experiences an exponential decay, see Fig \ref{subs}(b). Afterward, it becomes feasible to compute the behavior of a dissipative quantum system that undergoes both coherent driving and decay. In such cases, oscillations and decay co-occur as both behaviors exist simultaneously, see Fig \ref{subs}(c). In the following, we plot the state evolution under the action of optimal control. We initiate by the pure state $\rho_{11}=1$ and target to $\rho_{00}=1$, while following the constraints on purity preservation. The population evolution is shown in Fig \ref{subs}(d). According to the data depicted in the graph, there is a clear trend of exponential decay in the population of the initial state, while there is a corresponding upward trend in the population of the target as time passes. In order to check the security level of reaching the target, we have to calculate the transition probability known as the quantum fidelity. The term fidelity refers to the degree of similarity between two quantum states, typically the system state $\rho(t)$ and target $\sigma$. We study fidelity in terms of state purity computed by
$\mathcal{F}\left( \rho ,\sigma  \right)=\frac{{{\left( tr\left( \rho \sigma  \right) \right)}^{2}}}{P\left( \rho  \right)P\left( \sigma  \right)}$, \cite{indrajith2022fidelity}. 
Since our target $\sigma$ is a pure state, i.e., $\sigma =\left| \psi  \right\rangle \left\langle  \psi  \right|$, fidelity can simply be assessed as $   \mathcal{F}\left( \rho ,\sigma  \right)=\frac{\left\langle  \psi  \right|\rho \left| \psi  \right\rangle }{P\left( \rho  \right)}$.
Figure \ref{fig6} shows the effects of purity preservation on fidelity.
\section{CONCLUSIONS}
In this study, we proposed a framework for preserving quantum purity during a quantum state transition problem. Specifically, we aim to minimize both the time and energy required for the transition, while adhering to the Lindblad master equation, which governs the system dynamics. To achieve this goal, we employed a combination of two techniques, namely the Gamkrelidze revisited method and the concept of saturation functions and system extensions. The resulting boundary value problem is then solved using Pontryagin neural networks, which are well-suited for this type of problem formulation. Our approach allows us to preserve the purity of the quantum state during the transition while minimizing the resources required. Furthermore, we analyze the effects of state constraints on the evolution of quantum fidelity, providing a numerical example for a two-level system. As future work, we intend to extend our approach to higher-order dimensional systems, which could yield valuable insights and applications in various fields, including quantum computing and quantum communication. Our results have important implications for the practical implementation of quantum state transitions.
\begin{figure}[t!]
  \centering
  \includegraphics[scale=0.4]{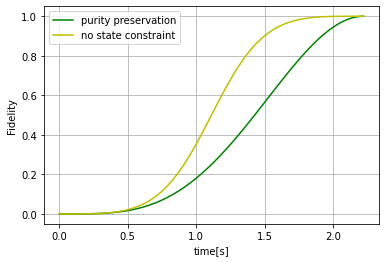}
  \caption{State transition probability quantified by purity}
  \label{fig6}
 \vspace{-0.65cm} 
\end{figure}

\end{document}